\newtheorem{proposition}{Proposition}
\newtheorem{assumption}{Assumption}
\newtheorem{lemma}{Lemma}
\title{Spatiotemporal blocking of the bouncy particle sampler for efficient inference in state space models}
\author[1]{Jacob Vorstrup Goldman}
\author[1]{Sumeetpal S. Singh}
\affil[1]{Signal Processing and Communications Laboratory, Department of Engineering, University of Cambridge}
\date{}
\begin{document}

\maketitle

\begin{abstract}
We propose a novel blocked version of the continuous-time bouncy particle sampler of \citep{bouchard2018bouncy} which is applicable to any differentiable probability density.  This alternative implementation is motivated by blocked Gibbs sampling for state space models \citep{singh2017blocking} and leads to significant improvement in terms of effective sample size per second, and furthermore, allows for significant parallelization of the resulting algorithm. The new algorithms are particularly efficient for latent state inference in high-dimensional state space models, where blocking in both space and time is necessary to avoid degeneracy of MCMC. The efficiency of our blocked bouncy particle sampler, in comparison with both the standard implementation of the bouncy particle sampler and the particle Gibbs algorithm of \cite{andrieu2010particle}, is illustrated numerically for both simulated data and a challenging real-world financial dataset. 
\end{abstract}

\section{Introduction}
\subsection{Background}
Markovian state space models are a class of probabilistic graphical models applied in biology, signal processing, target tracking, finance and more, see \cite{ryden2006inference} for a technical overview. In our setup, a latent process $(x_n, n \geq 1)$ on $\mathbb{R}^d$ evolves according to a state transition density $p(x_n | x_{n-1})$, with $p(\cdot)$ denoting a generic density. The dimension of the latent process is its spatial dimension, although often no physically relevant interpretation might be available. We indirectly observe the latent process through a noisy set of observations $(y_n, N \geq n \geq 1)$ defined on $\mathbb{R}^m$, where the realizations depend only on the current value of the latent state, $y_n | x_n \sim p(y_n| x_n)$. For convenience we introduce the sequence notation $i:j=(i,i+1,\ldots,j)$ when $j>i$. Unless otherwise mentioned, the sequence is $y_{1:N}$ fixed throughout. Given the observation sequence, we define smoothing as the off-line estimation of the conditional joint probability density $p(x_{l:m} | y_{1:N})$, with $1 \leq l \leq m \leq N$. We will be interested in the case where the target is the full conditional $p(x_{1:N} | y_{1:N})$. Smoothing is generally a hard problem due to the high dimensionality of the state space and spatiotemporal interdependence of the latent states; below we will give a brief historical overview, and subsequently detail our contributions. \\
Sequential Monte Carlo methods form the backbone of most smoothing algorithms. A popular early example is the sequential importance resampling smoother of \cite{kitagawa1996monte}, which utilises the entire trajectories and final particle weights of a particle filter to generate smoothed estimates. This method suffers from severe particle degeneracy as the resampling step non-strictly decreases the available paths used to estimate the joint posterior. A solution was the algorithm of \cite{godsill2004monte}, which introduces a sequence of backward passes incorporating the state transition. This algorithm has linear computation cost in time, particles and number of samples. Similar algorithms like the general two-filter smoother of \cite{briers2010smoothing} have equivalent computational costs. In \cite{finke2017approximate}, an approximate localization scheme is proposed for the forward-backward algorithm, including theoretical results that guarantees bounds on the asymptotic variance and bias under models that are sufficiently local. In the landmark paper of \cite{andrieu2010particle}, the authors introduced particle Markov Chain Monte Carlo, which combines particle filters in conjunction with either Metropolis-Hastings or Gibbs algorithms. The latter algorithm, denoted particle Gibbs, generates a single trajectory chosen according to the final particle weights from a particle filter run conditionally on a fixed trajectory. Particle Gibbs is stable if the number of particles grow at least linearly with the time series length; further theoretical analysis of ergodicity and asymptotic variance is provided in \cite{andrieu2018uniform} and \cite{chopin2015particle}. More recently, couplings of conditional particle filters have been introduced in \cite{jacob2020smoothing, lee2020coupled}, and provide unbiased estimators with asymptotically exact confidence intervals. \\

Unfortunately, the performance of particle Gibbs depends entirely on the efficiency of the conditional particle filter which like the particle filter can suffer from weight degeneracy. If the spatial dimension is large, the curse of dimensionality described in \cite{bengtsson2008curse} implies that infeasibly many particles are required to effectively approximate the posterior; localization of proposals by exploiting spatial conditional independence was subsequently introduced in \cite{rebeschini2015can} but this method is not generically applicable. As an alternative, the space-time particle filter \cite{beskos2017stable} is applicable if the likelihood can be written in a product form of terms that depend on an increasing number of latent dimensions. In the data assimilation field, a very popular method for high-dimensional filtering is the use of the Ensemble Kalman Filter algorithm, but the theoretical understanding of this algorithm is still quite limited, see however \cite{del2018stability, de2018long} for recent work in this regard.
Overall, there is no generically applicable, asymptotically exact approach that makes the particle filter viable in high dimensional time-series models.

In comparison with filtering which is known to be uniformly stable in time under reasonable assumptions, see \cite{van2019particle}, the difficulty of smoothing increases as the length of the time-series increases. In such scenarios, \cite{whiteley2010bs}, in the Royal Statistical Society's discussion of \cite{andrieu2010particle}, proposed to incorporate a backward pass similar to the algorithm of \cite{godsill2004monte} to avoid particle paucity in the early trajectories; for low spatial dimensions, the resulting algorithm was shown in \cite{lee2020coupled} to be computationally efficient and stable as the time horizon grows. A conceptually similar method that updates the fixed reference trajectory has been developed in \cite{lindsten2014particle}. As an alternative to manipulation of particle lineages, applying the particle Gibbs algorithm inside a generic Gibbs sampler over temporal blocks is proposed in \cite{singh2017blocking}, where the authors furthermore show a stable mixing rate as the length of the time series increases. \cite{singh2017blocking} also shows that the sharing of states via overlapping blocks increases the mixing rate as the overlap increase. While the issue of long time series has been addressed successfully by the algorithms detailed above, the curse of spatial dimensionality indicates that particle Gibbs and more sophisticated extensions are currently unworkable in practical smoothing applications featuring high spatial dimensions. 
\subsection{Contributions}
As a solution to the issues in high dimension, we propose a novel blocked sampling scheme based on irreversible, continuous-time piecewise deterministic Markov processes. Methods based on this class of stochastic process were originally introduced as event-chain Monte Carlo in the statistical physics literature by \cite{bernard2009event}, and subsequently further developed in the computational statistics literature recently, see for example \cite{bouchard2018bouncy, bierkens2019zig, wu2020coordinate, power2019accelerated}. In practice, the algorithms iterate persistent dynamics of the state variable with jumps to its direction at random event-times. They also only depend on evaluations of the gradient of the log-posterior. Local versions of these samplers, see \cite{bouchard2018bouncy} and \cite{bierkens2020piecewise}, can exploit any additive structure of the log-posterior density to more efficiently update trajectories, however as discussed above, long range dependencies of states indicate that sharing of information is desirable to achieve efficient mixing. To allow for sharing of information, we introduce a blocked version of the bouncy particle sampler of \cite{bouchard2018bouncy} that utilises arbitrarily designed overlapping blocks. {(Our resulting algorithm is different from the approach presented in \cite{zhao2019analysis} where the BPS is run on conditional distributions in a Metropolis-within-Gibbs type fashion.)} The blocking scheme is implementable without any additional assumptions on the target distribution, and is therefore useful for generic target densities, particularly in cases where the associated factor graph is highly dense. 

As our second contribution, we introduce an alternative implementation of the blocked sampler that leverages partitions to simultaneously update entire sets of blocks. The number of competing exponential clocks in the resulting sampler is independent of dimension and thus feature $O(1)$ clocks for any target, and allows, for the first time for a piecewise-deterministic Monte Carlo algorithm, to carry out parallel updates at event-times. Our numerical examples indicate that the blocked samplers can achieve noteworthy improvements compared to the bouncy particle sampler, both in terms of mixing time and effective sample size per unit of time, even without the use of parallelization. In addition, the blocked sampler provides efficient sampling of state space models when particle Gibbs methods, which are widely considered state of the art for state space models, fail due to high spatial dimensions. 
\section{Setup}
\subsection{Notation}
In what follows, subscript on a variable $x$ will denote temporal indices, while superscript indicates spatial. By $x \sim \mathcal{N}(0, 1)$ we mean that $x$ is distributed as a standard normal variable, whereas we by $\mathcal{N}(x; 0, 1)$ mean the evaluation at $x$ of the standard normal density; this notation is extended to other densities. We use the standard sequence notation $i{:}j = (i, i+1, \ldots, j-1, j)$ and $[n] = (1,2,\ldots, n-1, n)$.   A generic Poisson process is denoted by $\Pi$ and the associated, possibly inhomogeneous, rate function is the function $t \mapsto \lambda(t)$. Let $M_{m, n}$ be the space of $m \times n$ real-valued matrices, with $m$ referring to row and $n$ to columns, respectively. We denote by $\star$ the Hadamard product operator. The standard Frobenius norm of a matrix $X \in M_{m,n}$ is denoted $\Vert X \Vert_F = \sqrt{\text{tr}(X^T X)} = \sqrt{\sum_i \sum_j x_{i,j}^2}$, and the Frobenius inner product with another matrix $Y \in M_{m,n}$ is subsequently $\langle X, Y \rangle_F = \text{tr}(X^T Y) = \sum_i \sum_j x_{i,j}y_{i,j}$. 
\subsection{State space models} 
The class of state space models we consider have differentiable transition and observation densities
\begin{alignat*}{3}
&p(x_1) &&= \exp \Big \{ -f_1(x_1)\Big \}, &&f_1 \in C^1(\mathbb{R}^d \rightarrow \mathbb{R}), \\
&p(x_n | x_{n-1}) &&= \exp\Big \{-f(x_{n-1}, x_n) \Big \},\quad &&f \in C^1(\mathbb{R}^d \times \mathbb{R}^d \rightarrow \mathbb{R}), \\ 
&p(y_n | x_n) &&= \exp\Big\{-g(x_n, y_n)\Big\}, &&g\in C^1(\mathbb{R}^d \times \mathbb{R}^m \rightarrow \mathbb{R}).
\end{alignat*}
It is thus natural to work in log-space for the remainder of the paper, and we note in this regard that all probability density functions are assumed to be normalised. The exponential notation is therefore merely a notational convenience to avoid repeated mentions of log-densities. We also only require access to derivatives of $f$ and $g$ which may have more convenient expressions than the full probability distribution.  The negative log of the joint state density of the entire latent state $x \in M_{d,N}$ is denoted the potential energy $U: M_{d,N} \rightarrow \mathbb R$, and is given as
\begin{eqnarray*}
U(x_{1:N}) \equiv -\log \pi(x_{1:N} \mid y_{1:N}) =  f_1(x_1) + g(x_1,y_1) + \sum_{n=2}^N f(x_{n-1}, x_n) + g(x_n, y_n).
\end{eqnarray*}
To ease notation we have dropped the explicit dependence on $y_{1:N}$ when writing the log conditional joint state density from now on. We will often need to refer to the derivative $\partial U / \partial x$, which we denote as the matrix map $\nabla U: M_{d,N} \rightarrow M_{d,N}$ where the entry in the $k$'th row and $n$'th column is given by the partial derivative $\nabla U(x)_{k,n} = \partial U(x)/\partial x_n^k.$ Again, we remind the reader that subscript on a variable $x$ will denote temporal indices, while superscript indicates spatial. 

\subsection{Blocking strategies}\label{seq:block_strat}
Recall that $[n] = (1,2,\ldots, n-1, n)$. A \emph{blocking strategy} $\overline B$ is a cover of the index of set of the latent states $I = [d] \times [N]$, and solely consists of rectangular subsets. A generic block $B$ is always of the form $i{:}j \times l{:}m$ with $i <j, l<m$, with the coordinates referring to spatial and temporal dimensions, respectively. The size of a block is the ordered pair $(|i{:}j|, |l{:}m|)$. Blocks are allowed to overlap, we denote by the interior of a block the indices that it does not share with any other block. The neighbourhood set of a block is 
\[
N(B) = \{ B' \in \overline{B} \mid B \cap B' \neq \emptyset \},
\]
and always includes the block itself.
A blocking strategy is temporal if each block in a strategy is of the form $1{:}d \times l{:}m$, these are the most natural strategy types to consider for state space models and will be the general focus in the rest of the paper, but the methods presented below work for arbitrary strategies.
\begin{figure}
	\begin{center}
	\includegraphics[width=1\textwidth]{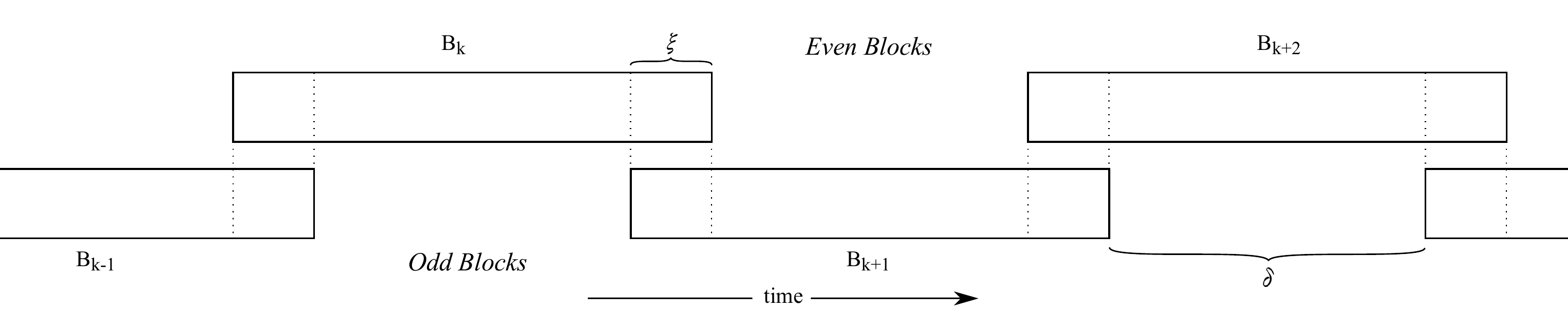}
	\end{center}
	\caption{A temporal blocking strategy with overlap $\xi$ and interior $\delta$ between blocks highlighted. The strategy will be efficient if the overlap $\xi$ is large enough to incorporate relevant information from neighbours.}
	\label{fig:temporal_strategy}
\end{figure}
{To improve mixing of blocked samplers in general it is often necessary to design a blocking strategy such that within-block correlation between variables is large while the correlation with out-of-block variables is small, see for example \cite{liu1994covariance} or \cite{turek2017automated}.} For state space models, this naturally implies blocking across time, and in Figure \ref{fig:temporal_strategy} a temporal strategy with overlap $\xi$ and interior $\delta$ is illustrated. We can in this case divide the blocks into even ($B_k$ of Figure \ref{fig:temporal_strategy} with even index $k$) and odd subsets such that each subset consists of non-overlapping blocks, see again the Figure \ref{fig:temporal_strategy}.  As analyzed in \cite{singh2017blocking} for blocked Gibbs samplers, temporal overlap leads to improved sharing of information across time and subsequent improved mixing. If the spatial dimension is very high, it can be necessary to block in the spatial domain as well; blocking strategies should in this case aim to exploit any spatial decorrelation if possible. 

A few more remarks on notation: the restriction of $x \in M_{d, N}$ to a block $B = i{:}j \times l{:}m$ is the submatrix $x_B \in M_{|i{:}j|, |l{:}m|}$ corresponding to deleting all but the rows $i{:}j$ and columns $l{:}m$ of $x$. Similarly, the block restriction of $\nabla U$ is the map $\nabla_B U: M_{d,N} \rightarrow M_{|i{:}j|, |l{:}m|}$; the entries of the submatrix $\nabla_B U(x)$ are in correspondence with $\nabla U(x)$ via $\nabla_B U(x)_{a,b} = \nabla U(x)_{i+a-1, l+b-1}$. We extend this notation to also include the state and the velocity, with the submatrix under consideration being indicated by a subscript $B$. 

\section{Blocked bouncy particle sampler}
In this section we derive conditions under which the bouncy particle sampler of \cite{peters2012rejection, bouchard2018bouncy} can be run in blocked fashion; the resulting algorithm therefore applies to any target distribution $\pi$. If we assume that $\overline{B}$ consists of a single block of size $1{:}d \times 1{:}N$, the description below reduces to the standard bouncy particle sampler and it is therefore redundant to describe both.

The class of piecewise-deterministic Markov process we consider is a coupling of the solution $x(t)$ of the ordinary differential equation $dx(t)/dt = v(t)$, and a Markov jump process $v(t)$ where both transition operator $Q(v, dv)$ and rate process $\lambda(t)$ depends on $x(t)$ as well; $v(t)$ will henceforth be denoted the velocity. The joint process $(x(t), v(t))$ takes values in $M_{d,N} \times M_{d,N}$. Given an initialization $(x(0), v(0))$, the state flows as $(x(t), v(t)) = (x(0) + t \cdot v(0), v(0)),$ until an event $\tau$ is generated by an inhomogeneous Poisson process with rate $\lambda(t)$. At this point the velocity changes to $v(\tau) \sim Q(v(0), dv)$, and the process re-initializes at $(x(\tau), v(\tau))$. To use such a process for Markov chain Monte Carlo, the jump rate $\lambda(t)$ and transition kernel $Q$ of $v(t)$ are chosen such that the marginal stationary distribution of $(x(t))_{t \in [0, \infty)}$ is the target distribution of interest. {Exactly as in Metropolis-Hastings algorithms, we always want to move into regions of higher probability but desire to change direction, by a new choice of velocity vector, as we enter regions of declining probability.} This in turn implies that the rate is determined by the directional derivative of the energy $U$ in the direction of $v$, while the transition kernel $Q$ is a deterministic involution or random velocity change, for general details see \cite{vanetti2017piecewise}. 

\textit{Blocking of this process corresponds to a localization of the rate function and transition kernel such that each block is equipped with its own random clock and corresponding local velocity updating mechanism.} Subsequently, only velocities within a single block are changed at an event, while preserving the overall invariant distribution. In comparison with discrete time blocking that updates the variables one block at a time while keeping every other variable else fixed, in continuous time the block direction is changed while keeping every other direction fixed. For dimensions that are in multiple blocks, the additional clocks implies an excess amount of events compared to the base case of no overlap; the $\phi$ variable introduced below adjusts for this discrepancy by speeding up the velocity of the shared dimensions. Intuitively, as a dimension shared by $k$ blocks will have events $k$ times as often, it should move at $k$ times the speed to compensate. This also aligns exactly with discrete-time blocked sampling, where dimensions shared between blocks are updated twice as often. 

We now present the blocked bouncy particle sampler in detail. We assume that the velocity is distributed such that each $v_n^k \sim N(0,1)$ in stationarity{, and the stationary joint distribution of all velocities has density $p_v(v)$}. For a blocking strategy $\overline{B}$, we introduce the auxiliary variable $\phi \in M_{d,N}$ with entries
\[
\phi_n^k= \# \{ B \in \overline{B} \mid (k,n) \in B \},
\]
$\phi_n^k$ counts the number of blocks that include the $k$'th spatial dimension and $n$'th temporal dimension. Given $\phi$, the resulting block-augmented flow of the ordinary differential equation driving $x(t)$ is $t \mapsto x + t \cdot (\phi \star v)$; as mentioned, individual dimensions of $x$ are sped up in proportion to how many blocks includes them.  
With $x \mapsto \{x\}^+ \equiv \max \{0, x\}$, the rate function for the Poisson process $\Pi^B$ associated with block $B$ is
\[
\lambda_B(x,v) = \left \{ \langle \nabla_B U(x), v_B \rangle_F \right \}^+;
\]
the associated superposition of all blocks is the Poisson process $\Pi^{\overline B} = \cup_{B \in \overline B} \Pi^{B}$. Events generated by $\Pi^{\overline{B}}$ are denoted $\tau_b$ with $b$ referring to a bounce. Note that the inner product corresponds to the directional derivative $\partial U(x+t\cdot v)/\partial t$ restricted to $B$.  For the transition kernel, we define $\textsc{reflect}^B_x(v)$ as the (deterministic) reflection of the velocity $v_B$ in the hyperplane tangent to the block gradient at $x$:
\[
v_B \leftarrow v_B - 2\frac{\langle \nabla_B U(x), v_B \rangle_F}{\Vert \nabla_B U(x) \Vert^2_F} \nabla_B U(x),
\]
while the remaining components of $v$ are {\it unchanged} by $\textsc{reflect}^B_x(v)$. (Note only the velocities that correspond to the block $B$ are updated.) Variable $v$ will also be updated by full velocity resampling via an independent homogeneous Poisson process with rate $\gamma$ to alleviate issues with reducibility, see \cite[Section 2.2]{bouchard2018bouncy}, and these event-times are denoted $\tau_r$ with $r$ referring to refreshment.
Without writing the refreshment operator, the infinitesimal generator of $(x(t), v(t))_{t \in [0,\infty)}$ is
\begin{equation}\label{generator}
\mathcal L^{\textsc{bBPS}} f(x,v) = \langle \nabla_x f(x, v), \phi \star v \rangle_F + \sum_{B \in \overline B} \lambda_B(x,v)\Big [f(x, \textsc{reflect}_x^B(v)) - f(x,v) \Big ],
\end{equation}
the sum of the block-augmented linear flow $\phi \star v$ driving $x(t)$ and the sum of Markov jump processes updating the block-restricted velocities $v_B$. 
\begin{proposition}\label{prop:bbps}
Consider a blocking strategy $\overline{B}$ and a target density $\pi(x) \propto \exp\{-U(x)\}$. With the generator defined in Equation (\ref{generator}), the blocked bouncy particle sampler has invariant distribution $\pi \otimes p_v$.
\end{proposition}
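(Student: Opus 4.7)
The plan is to verify invariance by showing $\int \mathcal{L}^{\textsc{bBPS}} f \, d(\pi \otimes p_v) = 0$ for a suitable class of test functions $f$, exploiting the fact that the transport term and the block jump terms cancel pairwise when weighted appropriately. The refreshment component is standard, since the refreshment kernel draws $v$ from its marginal $p_v$ and hence preserves $\pi \otimes p_v$; I will focus on the generator displayed in (\ref{generator}).

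First I would handle the transport term. Integration by parts in $x$ against $\pi(x)p_v(v) = e^{-U(x)}p_v(v)$ gives
\begin{equation*}
\int \langle \nabla_x f, \phi \star v \rangle_F \, \pi(x) p_v(v)\, dx\, dv = \int f(x,v) \langle \nabla U(x), \phi \star v \rangle_F \, \pi(x) p_v(v) \, dx\, dv.
\end{equation*}
The key observation is then the combinatorial identity
\begin{equation*}
\langle \nabla U(x), \phi \star v\rangle_F = \sum_{k,n} \phi_n^k \, (\nabla U(x))_{k,n} v_n^k = \sum_{B \in \overline B} \langle \nabla_B U(x), v_B\rangle_F,
\end{equation*}
since by definition $\phi_n^k$ counts exactly those blocks $B$ that include the coordinate $(k,n)$. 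So the transport integral equals $\sum_B \int f \, \langle \nabla_B U, v_B\rangle_F \, \pi p_v\, dx\, dv$. This is where the $\phi$-speed-up earns its keep: without it the transport term would undercount overlapped coordinates relative to the number of Poisson clocks firing on them.

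Next I would analyze each block's jump contribution separately. Fix $B$ and note that $R_B := \textsc{reflect}_x^B$ is an involution that reflects $v_B$ in the hyperplane orthogonal to $\nabla_B U(x)$ while leaving the remaining entries of $v$ fixed; hence $R_B$ preserves the isotropic Gaussian density $p_v$ and reverses the sign of $\langle \nabla_B U(x), v_B\rangle_F$. Changing variables $v\mapsto R_B v$ in one half of the jump term gives
\begin{equation*}
\int \lambda_B(x,v)\bigl[f(x,R_B v) - f(x,v)\bigr]\pi p_v\, dx\, dv = \int \bigl[\lambda_B(x, R_B v) - \lambda_B(x,v)\bigr]\, f(x,v)\, \pi p_v\, dx\, dv.
\end{equation*}
Using $\lambda_B(x,v) = \{\langle \nabla_B U, v_B\rangle_F\}^+$ together with the elementary identity $\{-\alpha\}^+ - \{\alpha\}^+ = -\alpha$, the bracket reduces to $-\langle \nabla_B U(x), v_B\rangle_F$. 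Summing the resulting $-\int f \, \langle \nabla_B U, v_B\rangle_F \, \pi p_v\, dx\, dv$ over $B \in \overline B$ exactly cancels the transport contribution, so the full generator integrates to zero against $\pi\otimes p_v$.

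The main obstacle is really only the bookkeeping in the first step: convincing oneself that the $\phi$-weighted transport term decomposes as a sum of per-block directional derivatives, so that it can be matched term-by-term with the block jumps. Once that decomposition is in place, the remainder is a coordinate-wise version of the classical BPS invariance argument of \cite{bouchard2018bouncy}, applied block by block, plus the observation that independent refreshment with target marginal $p_v$ leaves $\pi\otimes p_v$ invariant.
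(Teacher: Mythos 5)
Your proposal is correct and follows essentially the same route as the paper's proof: integration by parts on the transport term, the change of variables $v \mapsto \textsc{reflect}_x^B(v)$ justified by the involution and norm-preserving (hence $p_v$-preserving) property of the block reflection, the identity $\{-\alpha\}^+ - \{\alpha\}^+ = -\alpha$, and the observation that $\sum_{B}\langle \nabla_B U, v_B\rangle_F = \langle \nabla U, \phi \star v\rangle_F$ by the definition of $\phi$. The only cosmetic difference is that you decompose the transport term into per-block pieces and cancel block by block, whereas the paper sums the jump contributions into the full $\phi$-weighted gradient before cancelling; the content is identical.
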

\begin{proof}
See section \ref{proof:bbps}.
\end{proof}
The most closely corresponding method to the blocked bouncy particle sampler is the factor algorithm presented in \cite[Section 3.1]{bouchard2018bouncy}. If the target distribution factorises over a finite set of individual factors $\overline{F}$ such that
\[
U(x) = \sum_{f \in \overline{F}} U_{f}(x_{f}),
\]
where $x_f$ corresponds to the restriction of the components in the factor, the local bouncy particle sampler of \cite{bouchard2018bouncy} can be applied. {Note that the derivation of the local bouncy particle sampler in \cite{bouchard2018bouncy}  is only considered under the above sum structure for the log density $U(x)$ and where each block is the complete set of variables $x_f$ for a factor. This contrasts with the blocked sampler, where blocks  are allowed to share variables arbitrarily and without the need for the energy to satisfy a sum structure.}
The blocked sampler algorithm in practice functions as a hybrid between the Zig-Zag sampler of \cite{bierkens2019zig} and the bouncy particle sampler: it incorporates the reflection operator when performing bounces, which allows for updating the velocity vector for multiple dimensions at event-times, but combines a more local rate structure akin to that of the Zig-Zag sampler. {In particular, if $|B| = 1$ for all $B \in \overline{B}$ and $\phi_n^k = 1$ for all $k, n \in [d] \times [N]$, the algorithm reduces to a process very close to the Zig-Zag sampler, with the velocity vector components ``flipping'' at their individual reflection event times (but an invariant normal distribution for the velocities compared to the binary uniform distribution of the standard Zig-Zag sampler.)} In this sense, the Zig-Zag sampler is naturally blocked, but does not allow for sharing of information across dimensions. In Algorithm \ref{bbps_algorithm} the blocked bouncy particle sampler is presented in an implementable form. 
\subsection{Simulation}\label{sec:event_generation}
Due to the simplicity of the flow the computational challenge of the algorithm is to generate correctly distributed event-times via Poisson thinning. The thinning procedures of \cite{lewis1979simulation} for simulating inhomogeneous Poisson processes is a two-step procedure that corresponds to finding a bounding process where direct simulation is available, and subsequently using rejection sampling to keep correctly distributed event-times. 

To employ thinning, local upper bounds $t \mapsto \bar{\lambda}_{B}(t)$ for each block needs to be estimated. For some fixed lookahead time $\theta > 0$ and current position $(x, v)$, local bounds satisfy
\begin{align}
\lambda_{B}(t) \leq \bar{\lambda}_{B}(t) \leq \ \max_{s \in [0, \theta)} \left \{\langle \nabla_B U(x + s (\phi \star v)), v_B \rangle_F \right \}^+,\quad \forall t \in [0, \theta)
\end{align}
and after $\theta$ time has passed, the bounds are recomputed at the new location $(x + \theta (\phi \star v),v)$, if no reflection or refreshment has occurred in the interrim. {The right-hand side is the worst-case bound and in all of our numerical examples we use this bound. In some particular cases, universal global bounds can be derived, but generally these bounds will have to be estimated by evaluating the rate function at some future time-point. If the blocks are individually log-concave densities, evaluating the rate at the lookahead time, $\lambda_B(\theta),$ gives a valid bound until an event occurs, or alternatively, one can apply the convex optimization procedure described in \cite[Section 2.3.1]{bouchard2018bouncy}.}
If blocks are overlapping, the local bounds of blocks in the neighbourhood $N(B)$ become invalid after a reflection and require updating. The generic process is given in Algorithm \ref{local_bounds}. Given the global bounding function
\begin{align}\label{global_bounding_function}
\bar \lambda_{\overline B}(t) = \sum_{B \in \overline B} \bar \lambda_B(t)
\end{align}
an event-time $\tau$ is simulated from $\Pi_{\bar \lambda_{\overline B}}$, a block $B$ is selected with probability proportional to its relative rate $\bar \lambda_B(\tau)/\bar \lambda_{\overline B}(\tau)$, and finally a reflection is carried out with probability corresponding to the true rate function relative to the local bound $\lambda_B(\tau)/\bar \lambda_B(\tau)$. Given the local rate functions, the dominant cost is the unsorted proportional sampling of a block, which is done in $O(|\overline B|)$. We propose to choose $\theta$ such that the expected number of events generated by the bounding process on the interval $[0, \theta]$ is equal to 1, as we can always decrease the computational cost of calculating bounds by changing $\theta$ to another value that brings the expectation closer to 1. In our numerical examples we have tuned $\theta$ to approximately satisfy this requirement.

\begin{algorithm}
	\DontPrintSemicolon
	\KwData{Initialize $(x^0, v^0)$ arbitrarily, set instantaneous runtime time $t = 0$, index $j = 0$, total execution time $T > 0$, lookahead time $\theta > 0$ and valid bound time $\Theta = \theta$.} \tcp{Variable $t$ denotes instantaneous runtime, $\theta$ is lookahead time for computing Poisson rate bounds and $\Theta > t$ are integer multiples of $\theta$. 
}
	
	$(\bar{\lambda}_{B})_{B\in  \overline B} \leftarrow \textbf{LocalBounds}(x^0, v^0, \theta,  \overline B)$ \tcp{Calculate initial bounds}
	\While{$t \leq T$}
	{
		$j \leftarrow j + 1$ \;
		\indent $\tau_b \sim \text{Exp}(\sum_{B}  \bar{\lambda}_{B} )$ \tcp{Reflection/Bounce time}
		\indent $\tau_r \sim \text{Exp}(\gamma)$ \tcp{Refreshment time} 
		$\tau^j \leftarrow \min \{ \tau_r, \tau_b\}$\;
		\If{$\tau^j + t > \Theta$}
		{
			\tcp{Runtime+event time exceeds valid time for bound, reinitalize at $\Theta$}
			$x^j \leftarrow x^{j-1} + (\Theta - t)\cdot  \phi \star v^{j-1} $ \;
			$v^j \leftarrow v^{j-1}$ \;
			$(\bar{\lambda}_{B})_{B \in  \overline B} \leftarrow \textbf{LocalBounds}(x^j, v^j, \theta,  \overline B)$ \;
			$t \leftarrow \Theta$ \tcp{Update runtime}
			$\Theta \leftarrow \Theta + \theta$ \tcp{New valid bound time}
		}
		\Else
		{
			$t \leftarrow t + \tau^j$ \tcp{Update runtime}
			$x^j \leftarrow  x^{j-1} + \tau^j \cdot \phi \star v^{j-1}  $\;
			\If{$\tau^j < \tau_r$}{

				\tcp{Select block for reflection}
				Draw $B \in \overline{B}$ with $\mathbb{P}(B = B_i) = \bar{\lambda}_{B_i}(\tau^j) / \bar{\lambda}_{\overline{B}}(\tau^j)$ \;
				$u \sim \text{U}[0,1]$ \;
				\If{$u < \lambda_{B}(x^j, v^{j-1}) / \bar{\lambda}_{B}(\tau^j)$}
				{
					$v_{B}^j \leftarrow \textsc{reflect}_x^B v^{j-1}_{B}$ \;
					\tcp{Update bounds for blocks affected by reflection}
					$(\bar{\lambda}_{B'})_{B'\in N(B)} \leftarrow \textbf{LocalBounds}(x^j, v^j, \Theta - t, N(B))$ \;
				}
				 \Else 
				{
				 	$v^j \leftarrow v^{j-1}$
			 	}
			} \Else 
			{
				\tcp{Refresh all velocities}
				$v^j \sim \mathcal{N}(0, I_{(d \times T) \times (d \times T)})$\;
				$(\bar{\lambda}_{B})_{B \in  \overline B} \leftarrow \textbf{LocalBounds}(x^j, v^j, \theta,  \overline{B})$ \;
			}
		}
		\Return{$(x^j, v^j, \tau^j)$}
	}
\caption{Blocked Bouncy Particle Sampler}	\label{bbps_algorithm}
\end{algorithm}

\begin{algorithm}
	\DontPrintSemicolon
	\caption{\bf{LocalBounds}($x, v, \theta,  \overline B$)}\label{local_bounds}
	\KwData{$(x, v)$, $\theta > 0$ and set of blocks $ \overline B$.}
	\For{$B \in \overline B$}
	{
		Find function $t \mapsto \bar{\lambda}_{B}(t)$ that on $[0, \theta)$  satisfies \;
		$\bar{\lambda}_{B}(t) \leq \max_{s \in [0, \theta)} \lambda_{B}(x + s  (\phi \star v), v_{B})$.
	}
	\Return{$(\bar{\lambda}_{B})_{B \in  \overline B}$}
\end{algorithm}

\section{Parallel velocity updates via partitioned blocking strategies}
As mentioned in the introduction, \cite{singh2017blocking} shows that the even-odd blocking strategy with overlaps is known to improve mixing, and furthermore allows for parallelization of updates in the case of Kalman smoothers or particle filter-based smoothing algorithms. Conversely, the current crop of piecewise-deterministic Markov process-based samplers are all purely sequential, in the sense that at each event-time only the velocity of a single factor or dimension is updated, and these samplers therefore fail to exploit any conditional independence structure available. We will in this section provide an alternative implementation {(see Algorithm \ref{conditional_algorithm})} of the blocked bouncy particle sampler that mimics the even-odd strategy of discrete-time blocked samplers, extends to the fully spatially blocked setting, and allows for parallel implementation of updates at event-times. 
To utilise this method, we need a partition of the blocking strategy into sub-blocking strategies such that no two blocks in any sub-blocking strategy share any variables. To this end, we capture the no-overlap condition precisely in the following assumption: 
\begin{assumption}
\label{def:propdisjoint}
Consider a blocking strategy $\overline B$. We will assume given a partition $\cup_{k=1}^K \overline B_k = \overline B$ of the blocking strategy that satisfies, for each sub-blocking strategy $\overline B_k, k = 1,2,\ldots K$ and for all blocks $B, B' \in \overline B_k$ such that $B \neq B'$, that
\[
B \cap B' = \emptyset.
\]
\end{assumption}
This assumption also applies to fully spatiotemporal blocking schemes and not just temporal strategies. We will for illustrative purposes only describe in detail the simplest even-odd scheme of temporal blocking, which corresponds to $K=2$ sub-blocking strategies such that no blocks that are temporally adjacent are in the same sub-blocking strategy. {As shown in Figure \ref{fig:temporal_strategy}, each block is assigned a unique integer number $k$. We then partition the strategy into two sets of blocks based on whether $k$ is an even or odd integer, and denote the sub-blocking strategies  $\{ \overline B_{odd}, \overline B_{even} \}$. In Figure \ref{fig:temporal_strategy} we illustrate such a strategy, where the top row shows the even blocks, and the lower row the odd blocks. Note that individual even blocks have no state variables in common (similarly for individual odd blocks). Furthermore, for a Markovian state space model, each block is chosen to be a consecutive time sequence of states.}


{For such a sub-blocking strategy, we then find the maximum rate among all blocks inside a sub-blocking strategy
\begin{align}\label{lambda_bar}    \widehat{\Lambda}_{odd}(x,v) = \max_{B \in \overline B_{odd}} \lambda_B(x,v), \qquad \widehat{\Lambda}_{even}(x,v) = \max_{B \in \overline B_{even}} \lambda_B(x,v)
\end{align}
and denote their associated Poisson processes $\Pi_{odd}^{\overline B}$ and $\Pi_{even}^{\overline B}$. By construction, we will have two exponential clocks, one for  the set of blocks $\overline B_{odd}$ and one for $\overline B_{even}$.
To detail what happens at an event time, consider an event generated by the superposition of $\Pi_{odd}^{\overline B}$ and $\Pi_{even}^{\overline B}$ and say $\Pi_{odd}^{\overline B}$ generated the event. Then for each block $B \in \overline B_{odd}$, the following kernel $Q_x^B(v, dv)$ is used to update the velocity of that block 
\[
Q_x^B(v, dv) = \delta_{\textsc{reflect}_x^B (v)}(dv) \frac{\lambda_B(x,v)}{\widehat{\Lambda}_{odd}(x,v)} + \delta_v(dv) \left (1-\frac{\lambda_B(x,v)}{\widehat{\Lambda}_{odd}(x,v)} \right).
\]
 This simultaneous velocity update of all the blocks in the particular set of blocks is permissible since the blocks of each set have no states in common, i.e. do not overlap. In Algorithm \ref{conditional_algorithm} we provide pseudo-code for a fully implementable version of the blocked Bouncy Particle Sampler under an even-odd partition of the blocking strategy.}

We will show invariance for the  particular case considered above;  the result holds in general for any partition satisfying Assumption \ref{def:propdisjoint}.

\begin{proposition}\label{prop:pibps_invariance}
Let $\{ \overline B_{odd}, \overline B_{even} \}$ be a temporal strategy for $\pi$ and $\overline B$ satisfying Assumption \ref{def:propdisjoint}. Then the Markov process with associated generator
\[
\mathcal L^{\textsc{eoBPS}}f(x,v) =  \langle \nabla_x f(x,v), \phi \star v \rangle_F +  \sum_{\kappa \in \{odd, even \}}  \widehat \Lambda_\kappa(x,v) 
    \left [ 
    \sum_{B \in \overline B_\kappa} 
        \int \left [
         f(x, v') - f(x,v) 
        \right  ]Q_x^B(v, dv')
    \right ]
\]
has invariant distribution $\pi \otimes p_v$, where $\widehat\Lambda_\kappa(x,v)$ is defined in Equation \ref{lambda_bar}.
\end{proposition}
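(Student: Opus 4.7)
The plan is to show that, as a differential operator acting on smooth test functions, $\mathcal L^{\textsc{eoBPS}}$ is identically equal to the blocked generator $\mathcal L^{\textsc{bBPS}}$ from Equation (\ref{generator}). Once this identity is established, Proposition \ref{prop:bbps} gives invariance of $\pi \otimes p_v$ with no further work. The key step is to expand the two-atom kernel
\[
Q_x^B(v, dv') = \frac{\lambda_B(x,v)}{\widehat\Lambda_\kappa(x,v)}\,\delta_{\textsc{reflect}_x^B(v)}(dv') + \left(1 - \frac{\lambda_B(x,v)}{\widehat\Lambda_\kappa(x,v)}\right)\delta_v(dv')
\]
inside the integral. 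The ``stay'' atom contributes $f(x,v)-f(x,v)=0$, so
\[
\int [f(x,v') - f(x,v)]\,Q_x^B(v, dv') = \frac{\lambda_B(x,v)}{\widehat\Lambda_\kappa(x,v)}\bigl[f(x,\textsc{reflect}_x^B(v)) - f(x,v)\bigr].
\]

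Multiplying by the outer factor $\widehat\Lambda_\kappa(x,v)$ cancels the denominator, which is the crux of the computation: the jump term of $\mathcal L^{\textsc{eoBPS}}$ reduces to
\[
\sum_{\kappa \in \{odd,even\}}\ \sum_{B \in \overline B_\kappa} \lambda_B(x,v)\bigl[f(x,\textsc{reflect}_x^B(v)) - f(x,v)\bigr].
\]
Because the sub-blocking strategies form a partition $\overline B = \overline B_{odd} \cup \overline B_{even}$ (part of Assumption \ref{def:propdisjoint}), the nested sum collapses to a single sum over all $B \in \overline B$. Together with the flow term $\langle \nabla_x f(x,v),\, \phi \star v\rangle_F$, which is common to both generators, this gives precisely the right-hand side of Equation (\ref{generator}), proving $\mathcal L^{\textsc{eoBPS}}f = \mathcal L^{\textsc{bBPS}}f$.

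The main conceptual subtlety is why no combinatorial correction appears for the fact that multiple blocks within the same $\overline B_\kappa$ may be reflected at the same thinning event: each $Q_x^B$ enters the generator linearly rather than as a joint product kernel, and the per-block acceptance probability $\lambda_B/\widehat\Lambda_\kappa$ is calibrated exactly so that this linear sum recovers the bBPS rates after multiplication by $\widehat\Lambda_\kappa$. Assumption \ref{def:propdisjoint} is not strictly needed for the algebraic identity above, but it is what makes the parallel implementation in Algorithm \ref{conditional_algorithm} well-posed, since the individual reflections act on disjoint components of $v$ and therefore commute. Finally, the extension from the even-odd case to a general partition $\cup_{k=1}^K \overline B_k = \overline B$ is immediate: the same expansion, cancellation and partition-collapse steps apply verbatim with the outer sum ranging over $k = 1, \ldots, K$.
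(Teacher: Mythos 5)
Your proposal is correct and follows essentially the same route as the paper's proof: expand the two-atom kernel $Q_x^B$, note the ``stay'' atom contributes nothing, cancel $\widehat\Lambda_\kappa$ against the acceptance ratio, and collapse the partitioned double sum into the single sum defining $\mathcal L^{\textsc{bBPS}}$, then invoke Proposition \ref{prop:bbps}. Your added observation that the algebraic identity holds independently of the disjointness in Assumption \ref{def:propdisjoint} (which only matters for the parallel implementation) is a correct and worthwhile clarification.
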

\begin{proof}
See section \ref{proof:pibps_invariance}.
\end{proof}
In contrast to the basic blocked BPS, the generator of Proposition \ref{prop:pibps_invariance} has a single overall event-time generated from sum of odd and even strategies, but multiple overlapping event-times for the blocks contained in the sub-blocking strategy that generated the event. The even-odd algorithm therefore corresponds to an implementation that ``lines up'' the event times in such a way that is beneficial for a parallel implementation. Relative to the blocked bouncy particle sampler, the even-odd implementation iterates over every block in the sub-blocking strategy that generated the event, updating velocities of the blocks with probability proportional to the ratio of the block's rate $\lambda_B$ evaluated at the current state $(x,v)$ to the rate of the sub-blocking strategy given by the max-bound. It therefore becomes possible to parallelize the updating step, for example with multiple processors allocated to each sub-blocking strategy, say one processor per block of the sub-blocking strategy. In contrast to the generator in Proposition \ref{prop:bbps}, the event rate of the sampler in Proposition \ref{prop:pibps_invariance} is now the maximum over the rates in a sub-blocking strategy which should grow slower than the sum rate in Proposition \ref{prop:bbps} as the global dimension $(d)$ and thus number of blocks grow.

 If the spatial dimension is significant, it will be necessary to also carry out spatial blocking. Under a full spatiotemporal strategy, the above implementation naturally extends to a four clock system, consisting of alternating even-odd temporal strategies over each `row' of spatial blocks,  such that that no blocks from the same sub-strategy overlap; this in turn guarantees that Assumption \ref{def:propdisjoint} is satisfied. 

In practice, $\widehat{\Lambda}_\kappa$ is not available, as we can not evaluate the gradient in continuous time. Similarly to Algorithm \ref{bbps_algorithm}, we employ a lookahead time $\theta$ and a trivial global bound for the Poisson rates that is valid for the interval $(t,t+\theta]$ where as before $t$ is the instantaneous runtime. 
For any fixed $\theta > 0$, assuming $(x(t),v(t))=(x,v)$, let the globally valid bound $\bar \Lambda_\kappa$, with $\kappa \in \{odd, even\}$, be given as
\begin{align*}
\bar \Lambda_\kappa   \equiv \max_{B \in \overline B_{\kappa}} \sup_{s \in [0, \theta]} \lambda_B \left (x + s \cdot (\phi \star v), v\right).
\end{align*} 
As in Algorithm \ref{bbps_algorithm}, we use this rate to define a bounding Poisson processes and apply thinning to find the appropriate events, see line \ref{algo_reflection_time} in the algorithm.
{In practical implementations of piecewise-deterministic algorithms, tighter bounds for the event-times are in general necessary to avoid wasteful computation from false events. Our max-type bound is tighter than the sum-type bound, and we can therefore have a larger lookahead time $\theta$. (Again, $\theta$ should be chosen such that the expected number of events generated by the bound in an interval of size $\theta$ is 1.) With the max-type bound, the even-odd implementation will have larger event times compared to the blocked BPS.}


{The growth of the rate of the max-type bound, as a function of the number of blocks, is studied in the following result.} In particular, under plausible assumptions on the tail-decay of the target distribution we can bound the expected rate.

\begin{lemma}\label{lemma:bound_rate}
Assume that for all $B \in \overline{B}$ 
\[
\mathbb P(\lambda_B(x,v) > s) \leq 2e^{-2\alpha s}
\]
for some $\alpha > 0$. Then both the odd and even sub-blocking strategies, indicated by subscript $\kappa$, satisfies
\[
\mathbb E_\pi \widehat \Lambda_{\kappa}(x,v) \leq \frac{2e}{\alpha} \log |\overline{B}_{\kappa}|
\]
\end{lemma}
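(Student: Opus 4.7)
The plan is to recognize the statement as a standard sub-exponential maximum bound: the event rates $\lambda_B(x,v)$ satisfy a uniform exponential tail, and we need the expected max to grow only logarithmically in the number of blocks. The natural tool is the Cramér--Chernoff / soft-max inequality, driven by a moment generating function bound.

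The first step is to apply Jensen's inequality to the convex function $z \mapsto e^{\beta z}$ for a free parameter $\beta > 0$. Writing $Y_B \equiv \lambda_B(x,v)$, this yields
\begin{equation*}
\exp\bigl(\beta \, \mathbb E_\pi[\widehat \Lambda_\kappa]\bigr) \;\le\; \mathbb E_\pi\!\left[\max_{B \in \overline B_\kappa} e^{\beta Y_B}\right] \;\le\; \sum_{B \in \overline B_\kappa} \mathbb E_\pi[e^{\beta Y_B}],
\end{equation*}
so that $\mathbb E_\pi[\widehat \Lambda_\kappa] \le \tfrac{1}{\beta} \log\!\bigl(\sum_B \mathbb E_\pi[e^{\beta Y_B}]\bigr)$. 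The second step is to bound the per-block MGF using the layer-cake formula $\mathbb E_\pi[e^{\beta Y_B}] = 1 + \beta \int_0^\infty e^{\beta s}\, \mathbb P(Y_B > s)\, ds$ together with the assumed tail $\mathbb P(Y_B > s) \le 2 e^{-2\alpha s}$. Picking $\beta = \alpha$ (which is the natural choice keeping the residual integral convergent) collapses the integrand to $2\alpha e^{-\alpha s}$, giving $\mathbb E_\pi[e^{\alpha Y_B}] \le 1 + 2 = 3$ uniformly in $B$. Plugging back yields
\begin{equation*}
\mathbb E_\pi[\widehat \Lambda_\kappa(x,v)] \;\le\; \frac{1}{\alpha} \log\!\bigl(3\,|\overline B_\kappa|\bigr).
\end{equation*}

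To finish, I would verify the elementary inequality $\log(3\,|\overline B_\kappa|) \le 2e\, \log|\overline B_\kappa|$ whenever $|\overline B_\kappa| \ge 2$, which follows since $\log 3/(2e - 1) < \log 2$, hence $\log 3 \le (2e-1)\log|\overline B_\kappa|$ and the stated constant $2e/\alpha$ is recovered. There is no deep obstacle in this argument; the only design choice is the exponent $\beta$, and the sub-exponential rate $2\alpha$ in the hypothesis essentially pins down $\beta = \alpha$ as the canonical choice. If a tighter prefactor were desired, one could alternatively integrate the union-bounded tail $\mathbb P(\widehat \Lambda_\kappa > s) \le \min\{1,\, 2|\overline B_\kappa| e^{-2\alpha s}\}$ directly and optimize the split point at $s^\star = \tfrac{1}{2\alpha}\log(2|\overline B_\kappa|)$, producing the sharper $\tfrac{1}{2\alpha}\log(2e|\overline B_\kappa|)$, but this does not change the strategy.
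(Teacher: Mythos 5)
Your argument is correct and reaches the stated bound, but it takes a genuinely different route from the paper. The paper follows the moment-method maximal inequality (in the style of Lemma 1.10 of the cited Rigollet notes): it bounds $\mathbb E[\max_B \lambda_B]$ by $|\overline B_\kappa|^{1/p}\max_B(\mathbb E|\lambda_B|^p)^{1/p}$, controls the $p$-th moment via the layer-cake representation $\int_0^\infty \mathbb P(\lambda_B \ge t^{1/p})\,dt \le \frac{2p}{(2\alpha)^p}\Gamma(p)$, and then optimizes over $p$ at $p^\star=\log|\overline B_\kappa|$, which is where the factor $e=|\overline B_\kappa|^{1/\log|\overline B_\kappa|}$ and hence the constant $2e/\alpha$ come from. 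You instead use the exponential soft-max: Jensen plus $\mathbb E[e^{\alpha \lambda_B}]\le 3$ gives $\frac{1}{\alpha}\log(3|\overline B_\kappa|)$ directly, with no optimization over a free parameter, and your constant is strictly sharper (roughly $1/\alpha$ versus $2e/\alpha \approx 5.44/\alpha$); you then deliberately loosen it to match the stated form. Both routes share the same implicit caveat, which you state and the paper does not: the final bound $\frac{2e}{\alpha}\log|\overline B_\kappa|$ is vacuous (indeed false) when $|\overline B_\kappa|=1$, so one must assume at least two blocks per sub-strategy; likewise the paper's choice $p^\star=\log|\overline B_\kappa|$ silently requires $|\overline B_\kappa|$ large enough that $p^\star\ge 1$. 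In short, your proof is a valid and in fact tighter alternative; the only thing you would want to make explicit is the restriction $|\overline B_\kappa|\ge 2$ used in the final absorption step.
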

\begin{proof}
See section \ref{proof:event_rates}.
\end{proof}
{In the Gaussian case, the rate function is a mixture of a point-mass at zero and a density proportional to the modified Bessel function of the second kind with order depending on the dimension, and this function is known to have sub-exponential decay for any $d$, see for example \cite{yang2017approximating}. We note that the key point of Lemma \ref{lemma:bound_rate} is to verify that utilizing the maximum over blocks does not lead to pathological behaviour.}

To elaborate on the computational costs of the samplers, we compare the cost to run the samplers for one second. The exponential event-times of Poisson processes indicates we can expect $O(\log|\overline{B}_{\kappa}|)$ events per time unit (line \ref{conditional_algorithm}.\ref{algo_reflection_time}) via $\widehat{\Lambda}_\kappa$, each costing $O(|\overline{B}_{\kappa}|)$ evaluations of blocks (Line \ref{conditional_algorithm}.\ref{algo_partition_calculation}) per kernel $Q_x^B$. Thus the total cost of the even-odd sampler per second is $O(|\overline{B}_{\kappa}|\log|\overline{B}_{\kappa}|)$. {In comparison, the local bouncy particle sampler has a rate function defined as $\Lambda_{\overline{F}} = \sum_{F \in \overline{F}} \lambda_F(x,v) = \sum_{F \in \overline{F}} \max \{ 0, \langle \nabla U_F(x), v \rangle \}$, with $\overline{F}$ is the set of factors of $U$, $\nabla U_F(x)$ is the gradient of the factor $U_F(x)$. In this case,} the event rate growth is of the order $O(|\overline{F}|)$ by the inequality
\[
\mathbb E_\pi \Lambda_{\overline{F}}(x,v) = \mathbb E_\pi \sum_{F \in \overline{F}} \lambda_{F}(x,v) \geq |\overline{F}| \min_{F \in \overline{F}} \mathbb E_\pi \lambda_{\overline{F}}(x,v),
\]
combined with $O(1)$ costs per event-time, for a total cost of $O(|\overline{F}|)$ per sampler second. However, we note again that each of the $O(|\overline{B}_{\kappa}|)$ evaluations of the blocks can be carried out fully in parallel as no velocities are shared across ringing blocks. {Furthermore, in the continuous-time Markov Chain Monte Carlo literature the metric of effective sampler size per unit of trajectory length has been considered, and it is at this stage not known theoretically how the blocked BPS and the local BPS differ under this alternative measure of efficiency.}

\begin{algorithm}
	\DontPrintSemicolon
	\KwData{Initialize $(x^0, v^0)$ arbitrarily, set instantaneous runtime $t = 0$, index $j = 0$, total execution time $T > 0$, lookahead time $\theta > 0$ and  bound time $\Theta = \theta$.} 
	$(\bar{\lambda}_{B})_{B\in \overline B} \leftarrow \textbf{LocalBounds}(x^0, v^0, \theta, \overline B)$\tcp{Calculate initial bounds}

	\While{$t \leq T$}
	{
		$j \leftarrow j + 1$ \;
		\indent $\tau_b \sim \text{Exp}(\sum_{i \in \{even, odd\}} \bar \Lambda_i)$ \tcp{Reflection/bounce time} \label{algo_reflection_time}
		\indent $\tau_r \sim \text{Exp}(\gamma)$ \tcp{Refreshment time} 
		$\tau^j \leftarrow \min \{ \tau_r, \tau_b\}$\;
		\If{$\tau^j + t > \Theta$}
		{
			\tcp{Runtime+event time exceeds valid time for bound, reinitalize at $\Theta$}
			$x^j \leftarrow x^{j-1} + (\Theta - t)\cdot  \phi \star v^{j-1} $ \;
			$v^j \leftarrow v^{j-1}$ \;
			$(\bar{\lambda}_{B})_{B \in \overline B} \leftarrow \textbf{LocalBounds}(x^j, v^j, \theta, \overline B)$ \;
			$t \leftarrow \Theta$ \tcp{Update runtime}
			$\Theta \leftarrow \Theta + \theta$ \tcp{New valid bound time}
		}
		\Else
		{
			$t \leftarrow t + \tau^j$ \tcp{Update runtime}
			$x^j \leftarrow  x^{j-1} + \tau^j \cdot \phi \star v^{j-1}  $\;
			\If{$\tau^j < \tau_r$}{

				\tcp{Select blocking strategy subset}
				Draw $\kappa \in \{even, odd \}$ with $\mathbb{P}(\kappa = i) \propto \bar \Lambda_i$ \;
				\For{$B \in \overline B_\kappa$ \label{algo_partition_calculation}} 
				{
				    $u  \sim \text{U}[0,1]$ \;
				    \If{$u < \lambda_{B}(x^j,v^{j-1} ) / \bar \Lambda_\kappa$}
				{
					$v_{B}^j \leftarrow \textsc{reflect}^{B}_x v^{j-1}_{B}$ \;
					\For{$B \in \{B_{\max\{j-1, 1\}}, B_j, B_{\min \{j+1, |\overline B|\}} \}$} 
					{
						$\bar{\lambda}_{B} \leftarrow \textbf{LocalBounds}(x^j, v^j, \Theta- t, B)$ \;
					}

				}
				 \Else 
				{
				 	$v^j_B \leftarrow v^{j-1}_B$
			 	}
				}

			} \Else 
			{
				\tcp{Refresh all velocities}
				$v^j \sim \mathcal{N}(0, I_{(d \times T) \times (d \times T) })$\;
				$(\bar{\lambda}_{B_i})_{i\in \overline B} \leftarrow \textbf{LocalBounds}(x^j, v^j, \Theta-t, \overline{B})$ \;
			}
		}
	}
\caption{Even-odd implementation of blocked Bouncy Particle Sampler}
\label{conditional_algorithm}
\end{algorithm}


\section{Numerical Experiments}
We will in the following two sections provide two numerical experiments comparing the blocked BPS, the local BPS, and particle Gibbs. As we are primarily interested in latent state estimation, we have not considered parameter inference in the examples below. A natural approach here would be to run a Metropolis-within-Gibbs sampler that proposes an update to the parameter vector after, for example, running the continuous-time sampler for a second. The proposal of the parameter vector could be done in discrete time, or alternatively using the BPS for parameter vector. This latter strategy was proposed for continuous-time Markov chains in \cite{zhao2019analysis}. Alternatively, the parameters could be inferred jointly in continuous-time together with the latent states; the parameter vector could be included in the blocking strategy, in particular if the parameter vector is also dynamic across time. 

\subsection{Linear Gaussian toy model}
We consider an autoregressive model of order 1 given by
\begin{align*}
    x_n &= Ax_{n-1} + \eta_n, \quad \eta_n \sim \mathcal N(0,I_d)  \\
    y_n &= x_n + \epsilon_n, \quad \epsilon_n \sim \mathcal N(0,I_d)
\end{align*}
with $A$ an autoregressive matrix with entries $A_{ij} = \text{kern}(i,j)/\left (\psi + \sum_{l=1}^d \text{kern}(i,l)\right )$ with $\text{kern}(i,j) = \exp \left \{-\frac{1}{2\sigma^2} |i-j|^2 \right \}$ and $\psi > 0$ a constant, and finally, $x_0 \sim \mathcal N(0, I_d)$.
First, we want to compare the empirical mixing speed of the blocked and factor bouncy particle samplers. We consider a simulated model of $d = 3$ and $N=1000$, $\sigma^2 = 5$, and $\psi = 0.1$. We initialise each sampler at the zero vector, run until $T = 1000$, and thin at every 0.1 sampler second. In Table \ref{example_one_table} we provide detailed specifications of the setups for the various algorithms and results from a representative run of the algorithms.

\begin{table}[ht]
\begin{centering}
\begin{tabular}{lrrr}
\toprule
\textit{Algorithm}                & Local BPS & Blocked BPS & Even-odd \\ \midrule
Dimensions per factor/block       & 60        & 60          & 60            \\
Number of factors/blocks          & 50        & 101         & 101           \\
Number of sub-blocking strategies & -         & -           & 2             \\ \midrule
Temporal width                    & 20        & 20          & 20            \\
Spatial width                     & 3         & 3           & 3             \\
Temporal overlap                  & -         & 10          & 10            \\
Spatial overlap                   & -         & 0           & 0             \\ \midrule
Relative performance              & 0.48      & 0.67        & 1.00          \\ \bottomrule
\end{tabular}
\caption{Specification of implementations and results for the autoregressive Gaussian model with $T = 1000$ and $d = 3$. Performance is measured in terms of ESS/s relative to the even-odd bBPS.}
\label{example_one_table}
\end{centering}

\end{table}

In Figure \ref{fig:mse_v_time}, we plot the log of the mean square error as a function of time for increasing block overlap; empirically the blocked sampler with block width 20 and overlap 10 reaches stationarity around 3 times faster than the factor version. In Figure \ref{fig:msjd_v_dim}, we compare the mean squared jumping distance of the first spatial dimension after discarding the first 25\% of samples. For the overlapping sections, the exploration is, due to the shared overlap and $\phi$, happening at twice the speed, and, accordingly, four times the mean-square jumping distance compared to the factor algorithm. In terms of effective sample size per second, the blocked and even-odd samplers are about 30-40\% and 100\% more efficient respectively than the factor sampler, without using any parallel implementation. It is observed in general for any choice of $d$ and $T$ that the benefits of speeding up the dimensions compensate for the increased computational cost due to the overlaps. We also note that for models like this where the spatial dimension is low, there is not a strong argument to use PDMP-based methods as particle Gibbs with a basic particle filter will be more than adequate.

\begin{figure}
\begin{subfigure}{.45\textwidth}
	\begin{center}
	\includegraphics[width=\textwidth]{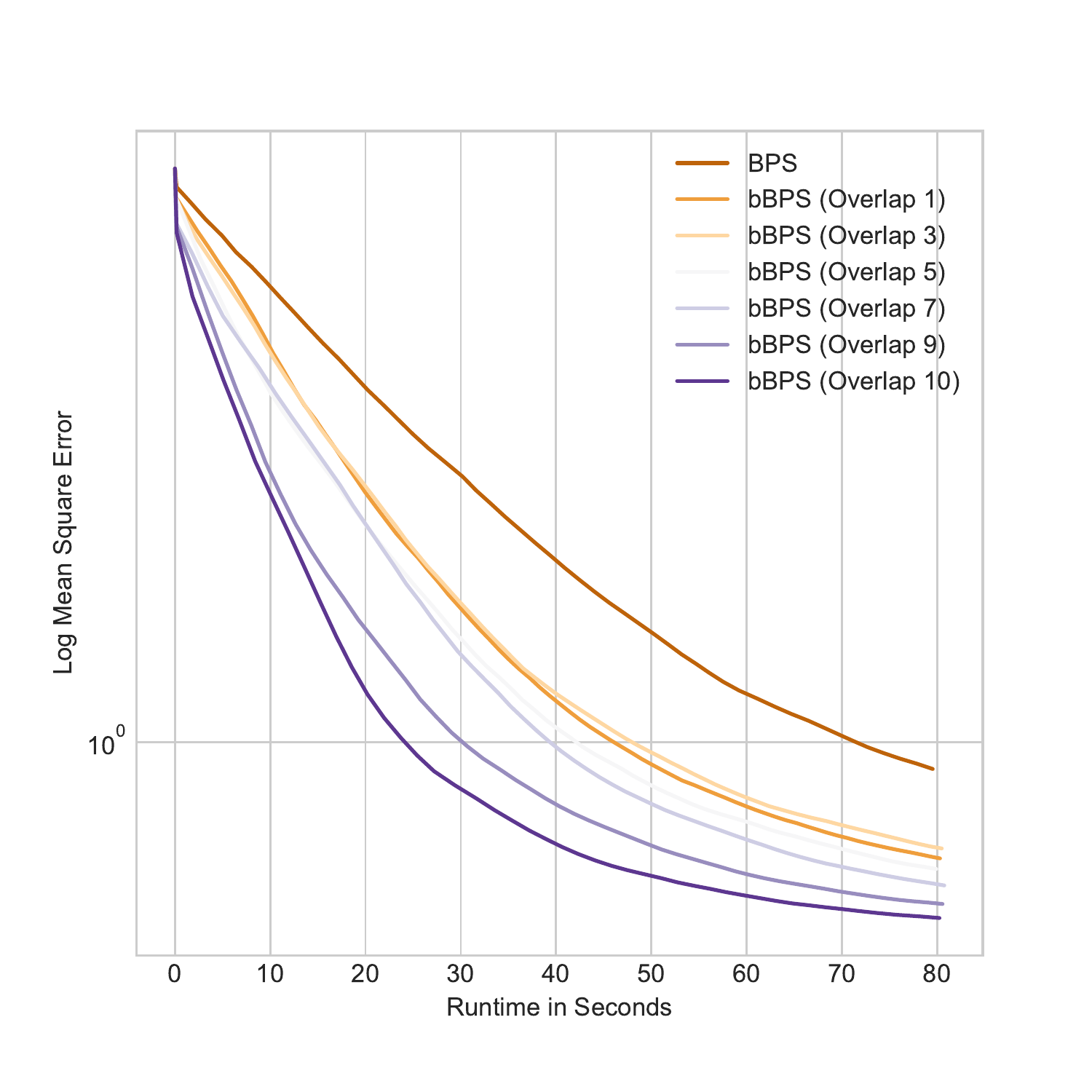}
	\end{center}
	\caption{}
	\label{fig:mse_v_time}
\end{subfigure}
\begin{subfigure}{.45\textwidth}
	\begin{center}
	\includegraphics[width=\textwidth]{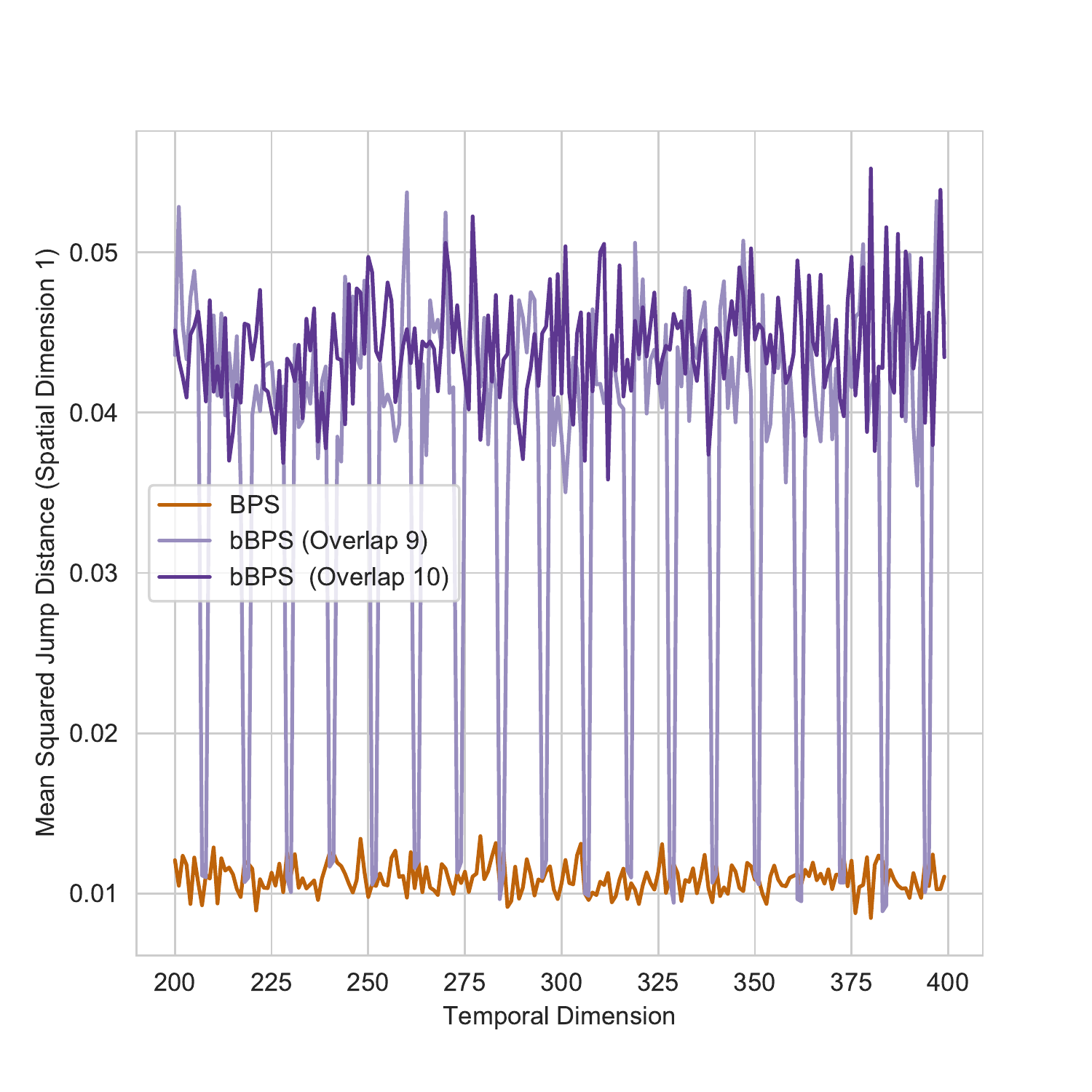}
	\end{center}
	\caption{}
	\label{fig:msjd_v_dim}
\end{subfigure}\vfill
\vspace{-0.5\baselineskip}
\caption{(a) Mean square error estimate per unit of CPU time of the autoregressive Gaussian model as the overlap varies. (b) Mean square jump distance for the standard bouncy particle sampler and blocked counter-part with overlaps 9 and 10, showcasing the impact $\phi$ has on exploration. In particular, the dips for the overlap 9 case corresponds to the variables that are part of a single block only, and subsequently are not sped up. We show a subset of 200 time points to enhance detail.}
\end{figure}

Second, we consider the case where $d=200$ and $T=100$ to illustrate the benefits of spatial blocking in high dimensional scenarios. In this case we also include a spatiotemporal blocking strategy, and the details of the example and a representative simulation are provided in Table \ref{example_two_table}. The model and example parameters are otherwise as described above.

\begin{table}[ht]
\centering
\begin{tabular}{@{}lrrrr@{}}
\toprule
\textit{Algorithm}                & Local BPS & Blocked BPS & Even-odd & Spatiotemporal \\ \midrule
Dimensions per factor/block       & 400       & 400         & 400           & 54                  \\
Number of factors/blocks          & 50        & 99          & 99            & 957                 \\
Number of sub-blocking strategies & -         & -           & 2             & 4                   \\ \midrule
Temporal width                    & 2         & 2           & 2             & 9                   \\
Spatial width                     & 200       & 200         & 200           & 6                   \\
Temporal overlap                  & -         & 1           & 1             & 3                   \\
Spatial overlap                   & -         & 0           & 0             & 2                   \\ \midrule
Relative performance              & 0.36      & 0.34        & 0.56              & 1.00                   \\ \bottomrule
\end{tabular}
\caption{Specification of implementations and results for the autoregressive Gaussian model with $T = 100$ and $d = 200$. Performance is measured relative to ESS/s for the spatiotemporal bBPS.}
\label{example_two_table}
\end{table}

The spatiotemporally blocked sampler significantly outperforms the other implementations, with effective sample size per second typically 2-4 times larger, evidenced over multiple runs with random trajectories generated under the model. The even-odd temporal implementation blocked strategy is often still efficient even when the number of dimensions per block is up to 400, but the relative ESS/s is on aggregate lower than the spatiotemporally blocked version. Furthermore, this discrepancy will only increase under models with even higher spatial dimension. As before, no concurrent implementation was used, indicating that additional improvements in performance are possible for the partitioned blocking schemes when parallelized over multiple processors.

\subsection{Heavy-tailed stochastic volatility with leverage effects}
We will in this section consider an example based a stochastic volatility model of the Dow Jones Industrial Average (DJIA) equity index to explore the efficiency of the even-odd implementation of the BPS in comparison with two benchmark implementations of particle Gibbs when the spatial dimension is moderate and the length of the time-series is long. 
Stochastic volatility models are widely used in finance and econometrics. They model the volatility of financial assets as a dynamic latent process to capture the time-varying and persistent nature of changes in asset returns. We will analyse a general model proposed by \cite{ishihara2012efficient} that incorporates heavy-tailed observations and leverage effects, see \cite{cont2001empirical} for empirical discussion of these effects. To test the blocked algorithms on a reasonably challenging dataset, we attempt to estimate the latent volatility of the 29 continuously available constituents of the DJIA between April 1st 2017 and April 6th 2020, for a total of $29 \times 757 = 21953$ latent states. This period is characterized both by relatively low volatility and historical high levels uncertainty due to the COVID-19 pandemic, see \cite{world2020coronavirus} for example. 

\begin{figure}
\begin{center}
\includegraphics[width=\textwidth]{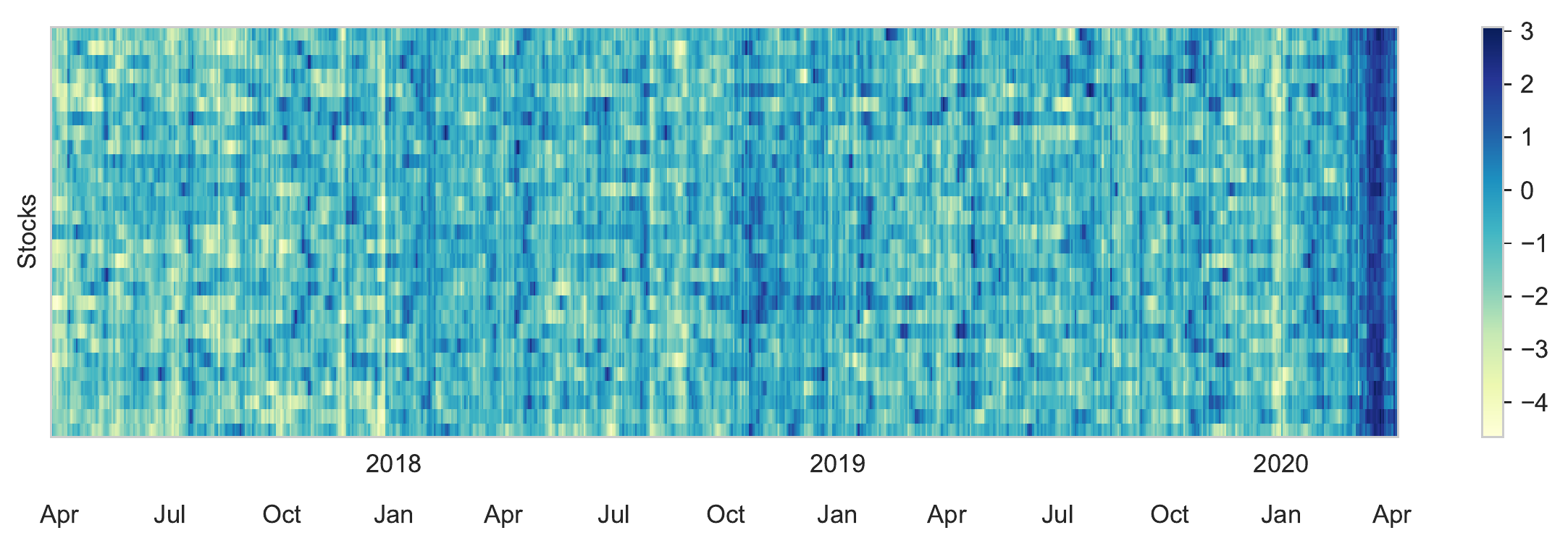}
\end{center}
\label{fig:estimated_vol}
\caption{Estimated latent volatility (the posterior mean after discarding the first 250 samples) via the blocked bouncy particle sampler for the 29 continuously available constituents of the Dow Jones Industrial Average 30 index between April 1st 2017 and April 6th 2020. }
\end{figure}

Let $x_n \in \mathbb{R}^d$ be an unobserved vector of volatilities, and $y_n \in \mathbb{R}^d$ be observed asset log returns. The dynamics over a fixed time horizon $n = 1,2,\ldots, N$ are
\begin{eqnarray*}
x_{n+1} &=& A x_{n} + \eta_n \\
y_n &=& \gamma_n ^{-\frac{1}{2}}\Lambda_n \epsilon_n, \qquad \Lambda_n = \text{diag}(\exp \Big \{ \frac{x_n}{2} \Big \})
\end{eqnarray*}
with $A =\text{diag}(\alpha_1, \alpha_2, \ldots, \alpha_d)$, where $\alpha_i \in [0, 1), \forall i \in \{1, 2, \ldots, d\}$. The noise is jointly modelled as 
\[
\begin{pmatrix}
\eta_n \\
\epsilon_n
\end{pmatrix} \sim \mathcal{N}(0, \widehat{\Sigma}), \text{ with } \widehat{\Sigma} = 
\begin{pmatrix}
\Sigma_\eta & \Sigma_{\rho} \\
\Sigma_{\rho} & \Sigma_\epsilon
\end{pmatrix} 
\]
and $\widehat{\Sigma}$ a $2d \times 2d$ matrix. The off-diagonal block matrices introduce leverage effects in the model if they are negative definite. Finally, for some $\nu \in \mathbb{N},$ $\gamma_n \sim \Gamma(\frac{\nu}{2}, \frac{\nu}{2})$ is a memory-less stochastic process independent of $(\eta_n, \epsilon_n)$. The resulting observation noise is multivariate t-distributed with $\nu$ degrees of freedom, details are in \cite{ishihara2012efficient}. For the initialization we assume that $x_1 \sim 
\mathcal N(0, (I_d - AA)^{-1} \Sigma_{\eta})$. Define $y_n^\gamma = \sqrt{\gamma_n} y_n$ as the observations whenever $\gamma_n$ is known; it follows that $y_n^\gamma = \Lambda_n \epsilon_n$ and inference can be carried out with this observation sequence instead. Conditional on $\gamma_{1:N}$ and using basic properties of multivariate Gaussians, the transition distributions can be written as
\begin{eqnarray*}
p(x_n|x_{n-1}, y_{n-1}^\gamma) &=& \mathcal N(Ax_{n-1} + \Sigma_\rho \Sigma_\epsilon^{-1} \Lambda_{n-1}^{-1} y_{n-1}^\gamma, \Sigma_\eta - \Sigma_\rho\Sigma_\epsilon^{-1}\Sigma_\rho) \\
p(y^\gamma_n|x_n) &=& \mathcal N(0, \Lambda_n \Sigma_\epsilon \Lambda_n),
\end{eqnarray*}
implying that the distribution has a more complicated dependence structure, as the past observation feeds into the next realized state. Furthermore, the state transition is non-linear in the previous state variable due to the leverage effect.

\begin{figure}
\begin{subfigure}{.45\textwidth}
	\begin{center}
	\includegraphics[width=\textwidth]{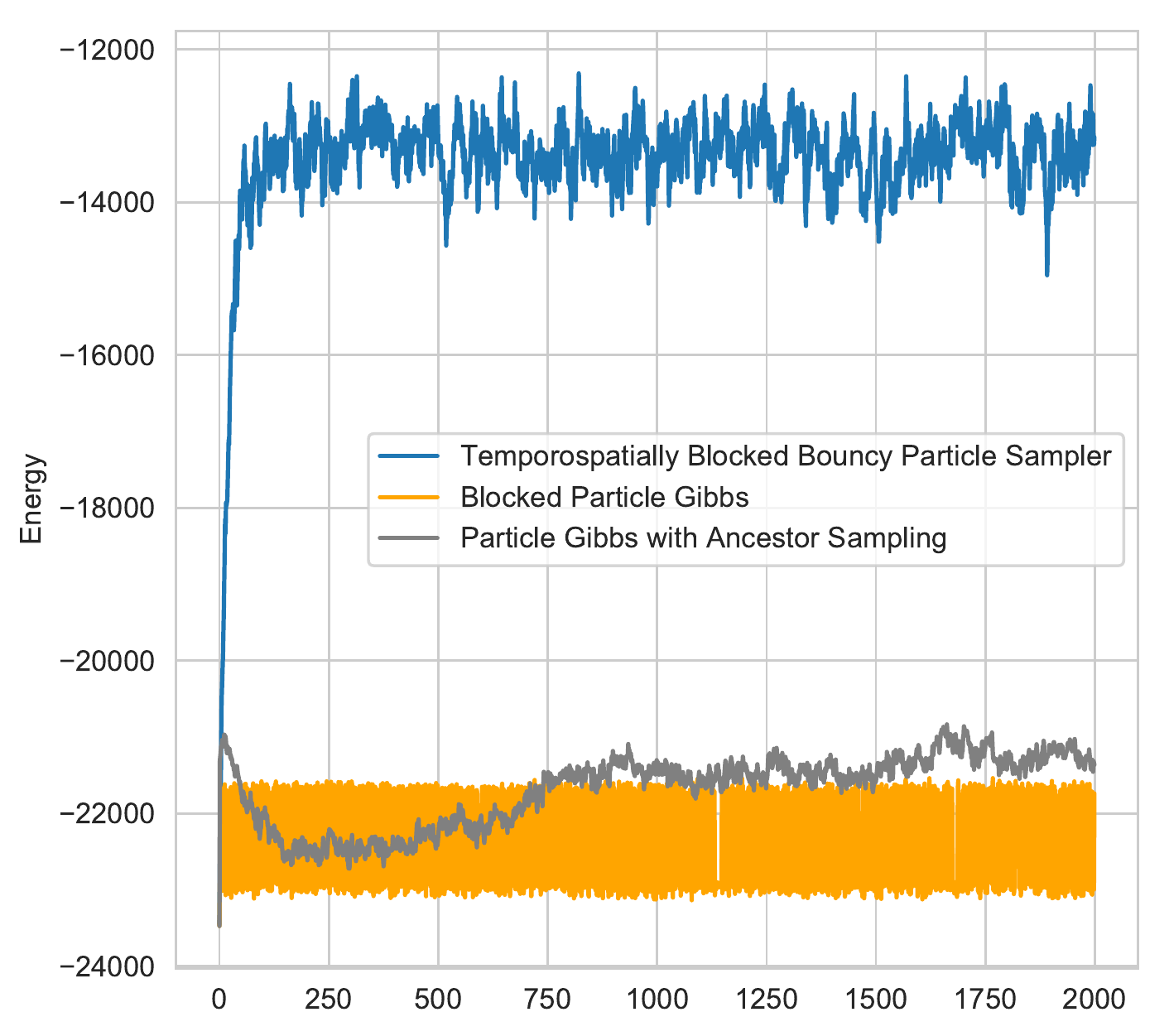}
	\end{center}
	\caption{}
	\label{fig:stochvol_traceplot}
\end{subfigure}
\begin{subfigure}{.45\textwidth}
	\begin{center}
	\includegraphics[width=\textwidth]{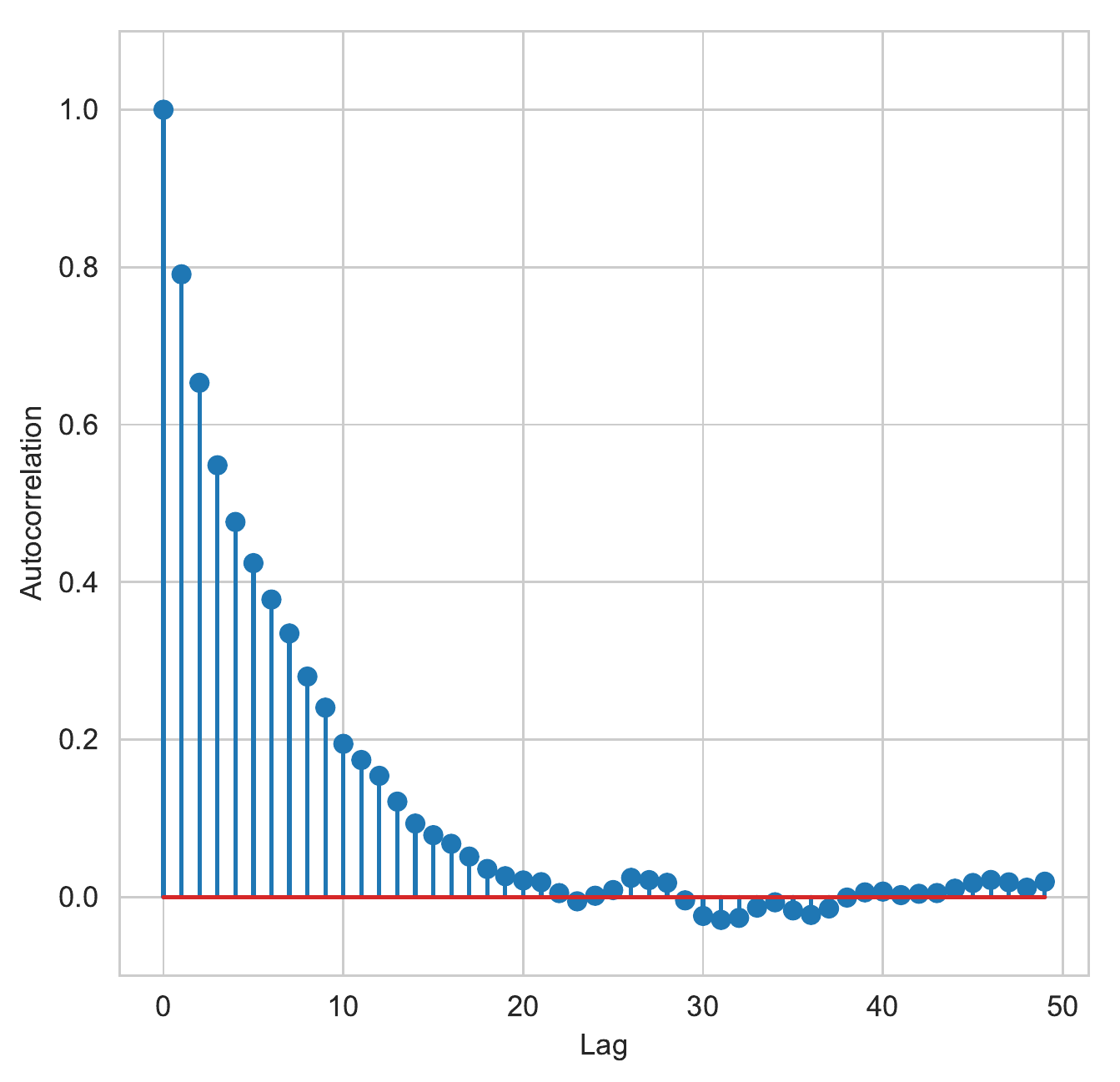}
	\end{center}
	\caption{}
	\label{fig:stoch_vol_autocorr}
\end{subfigure}\vfill
\vspace{-0.5\baselineskip}
\caption{(a) Traceplot of the log-posterior of the stochastic volatility model for all three samplers. (b) Autocorrelation of the energy for the blocked bouncy particle sampler after discarding the first $250$ samples as burn-in. }
\end{figure}

For the blocking strategy, use a spatiotemporal strategy with blocks 9 timepoints wide, 7 spatial dimensions high, and each block has temporal overlap 4 and spatial overlap 3, giving a total of $151 \times 6 = 906$ blocks. Due to the better performance of partitioned blocked bouncy particle sampler in the previous example, we only compare this method with blocked particle Gibbs, see \cite{singh2017blocking}, and the particle Gibbs with ancestor sampling algorithm of \cite{lindsten2014particle}, both using a bootstrap particle filter as proposal mechanism. For the blocked particle Gibbs sampler, we let the blocks be 25 observations wide and have overlap 5. For a fair comparison, we set the number of particles to 500 which leads to an average time per sample quite close to that of the spatiotemporal blocked bouncy particle sampler for both samplers. We generated 2000 samples via each algorithm, and initialized each at the $d \times N$ zero vector, and for the velocity we used the $d \times N$ vector of ones. Typically, estimation of latent states will be carried out inside a Gibbs sampling algorithm that also estimates parameters, indicating that prior knowledge of the states are retained, whereas this example tests the significantly more difficult case of no prior information on the latent states. 

In Figure \ref{fig:stochvol_traceplot}, we illustrate the posterior energy. The blocked particle Gibbs sampler moves in a wide band of posterior energy, but never reaches levels of higher posterior probability. This is in contrast to the results reported in  \cite{singh2017blocking} where the dimension of the hidden state is much lower and thus the state transition density has better forgetting properties than our higher dimensional example. Even if this issue could be remedied, see \cite{bunch2015particle}, implementing particle Gibbs with both temporal and spatial blocking appears non-trivial in contrast to the ease of which it can be achieved with the BPS. The ancestor sampling-based particle Gibbs sampler similarly does not generate proposals that have high posterior probability. Conversely, the bBPS reaches stationarity in less than 100 samples, and subsequently mixes across the posterior: the auto-correlation function, plotted in Figure \ref{fig:stoch_vol_autocorr}, reaches zero around a lag of 20 samples, indicating adequate mixing for a posterior of this dimension. In Figure \ref{corr_estimates}, we plot the correlation matrix of the assets, and also the estimated latent volatility via the posterior mean. It is quite clear that the volatilities show weaker correlation across the assets, but appear to preserve some of the structure of seen in the correlation matrix of the log returns.

\begin{figure}
\begin{center}
\includegraphics[width=\textwidth]{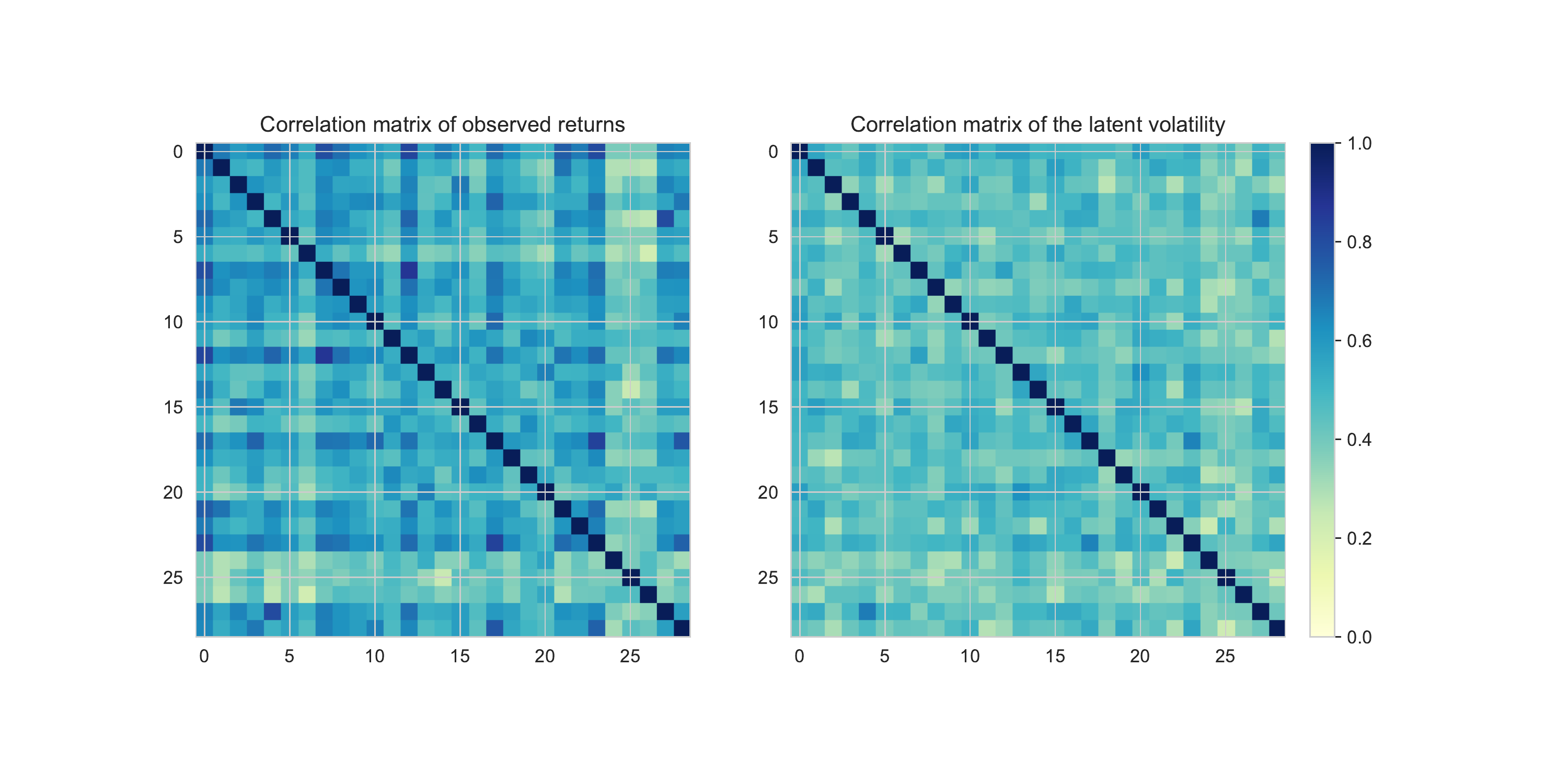}
\end{center}
\caption{Left: estimated correlation matrix from the log-returns over the entire period. Right: estimated correlation matrix of the latent volatilities from the posterior mean estimate from the even-odd bBPS. }\label{corr_estimates}
\end{figure}

\section*{Acknowledgement}
JVG acknowledges financial support from an EPSRC Doctoral Training Award.

\appendix

\section{Proofs}
\label{section:proofs}
\subsection{Proof of Proposition \ref{prop:bbps}}
\label{proof:bbps}
\begin{proof}
{Denoting the domain of $\mathcal L^{\textsc{bBPS}} $ by $D(\mathcal L^{\textsc{bBPS}})$, invariance follows if $\forall f \in \overline{D} \subset D(\mathcal L^{\textsc{bBPS}})$, with $\overline{D}$ a core for $(\mathcal L^{\textsc{bBPS}}, D(\mathcal L^{\textsc{bBPS}}))$, that
\begin{align*}
\int \mathcal L^{\textsc{bBPS}} f(x,v) \pi(dx)p(dv) = 0,
\end{align*}
see \cite[Chapter 9]{ethier2009markov}. Identification of the core for the generators associated with piece-wise deterministic Markov processes is quite technical, we refer to \cite[Section 7]{durmus2018piecewise} and \cite[Section 5]{holderrieth2019cores} for details. We from now on assume for the test function that $f \in C^{1,0}_0( (\mathbb R^d \times \mathbb R^N) \times (\mathbb{R}^d \times \mathbb{R}^N) \rightarrow \mathbb{R})$, for the density that $U(x) \in C^1(\mathbb{R}^d \times \mathbb{R}^N \rightarrow \mathbb{R})$.}
The proof in essence follows that of Proposition 1 in \cite{bouchard2018bouncy}.  For the first part of the generator associated with the linear flow, consider first the integral with respect to $\pi(x)$. We have
\begin{eqnarray*}
\int \langle \nabla_x f(x, v), \phi \star v \rangle_F \pi(dx) = \frac 1 Z \int \langle \nabla_x f(x, v),  \phi \star  v \rangle_F e^{-U(x)} dx,
\end{eqnarray*}
where $Z$ is the normalising constant $Z = \int e^{-U(x)}$. Applying integration-by-parts we immediately get
\begin{eqnarray*}
\frac 1 Z \int \langle \nabla_x f(x, v),  \phi \star v \rangle_F e^{-U(x)} dx &=&  
\frac 1 Z \int  f(x,v) \langle \nabla U(x),  \phi \star v\rangle_F e^{-U(x)}dx \\
&=& \int f(x,v) \langle \nabla U(x), \phi \star v \rangle_F \pi(dx)
\end{eqnarray*}
by integrability of the functions in the domain of $\mathcal L^{bBPS}$. For the second part, note that
\begin{eqnarray*}
	\sum_{B \in \overline B} \lambda_B(x, \textsc{reflect}_x^B (v)) - \lambda_B(x,v) &=& \sum_{B \in \overline B} \big \{-\langle \nabla_B U(x), v_B \rangle_F \big \}^+ -  \big \{\langle \nabla_B U(x), v_B\rangle_F \big \}^+ \\
	&=& \sum_{B \in \overline B} - \langle \nabla_B U(x), v_B \rangle_F  \\
	&=& -\langle \nabla U(x), \phi \star v \rangle_F.
\end{eqnarray*}
where the last line follows by the definition of $\phi$.
Consider then the integral of the jump dynamics generator
\[
 \int \int\sum_{B \in \overline B} \lambda_B(x,v)\Big [f(x, \textsc{reflect}_x^B (v)) - f(x,v) \Big ] \pi(dx) p(dv).
\]
Using that $(\textsc{reflect}^{B}_x)^{-1} = \textsc{reflect}^{B}_x$ by involution and the norm-preserving property of the restricted reflection operator we get
\[
\int  \lambda_B(x,v)f(x, \textsc{reflect}_x^B (v))\pi(dx) p(dv) = \int \lambda_B(x,\textsc{reflect}_x^B(v))f(x,  v) \pi(dx) p(dv),
\]
so we have from using the identity above that
\begin{eqnarray*}
 \int \int \sum_{B \in \overline B} \lambda_B(x,v)\Big [f(x, \textsc{reflect}_x^B (v)) - f(x,v) \Big ] \pi(dx) p(dv) \\ = \int \int \sum_{B \in \overline B} \Big [\lambda_B(x, \textsc{reflect}_x^B (v) - \lambda_B(x,v) \Big ]f(x,v) \pi(dx) p(dv) 
 = - \int \int\langle \nabla U(x), \phi \star v \rangle_F f(x,v) \pi(dx) p(dv),
\end{eqnarray*}
which implies the result.
\end{proof}

\subsection{Proof of Proposition \ref{prop:pibps_invariance}}
\label{proof:pibps_invariance}
\begin{proof}
We will show that the eoBPS is a special case of the blocked bouncy particle sampler, we again subdue dependence on refreshments. Writing out the integral with respect to $Q^B_x,$ we have
\begin{align*}
    \mathcal L^{\textsc{eoBPS}}f(x,v) &=  \langle \nabla_x f(x,v), \phi \star v \rangle_F \\ &+ \sum_{\kappa \in \{odd, even \}}  \widehat \Lambda^\kappa(x,v) 
    \left [ 
    \sum_{B \in \overline B_\kappa} 
        \left [
        \frac{\lambda_B(x,v)}{\widehat\Lambda^\kappa(x,v)}f(x, \textsc{reflect}_x^B (v)) + \left (1-\frac{\lambda_B(x,v)}{\widehat\Lambda^\kappa(x,v)} \right) f(x,v) - f(x,v)
        \right  ]
    \right ] \\
    &=\langle \nabla_x f(x,v), \phi \star v \rangle +  
    \sum_{\kappa \in \{odd, even \}}  \widehat \Lambda^\kappa(x,v) 
    \left [ 
        \sum_{B \in \overline B_\kappa} 
        \left [
        \frac{\lambda_B(x,v)}{\widehat\Lambda^\kappa(x,v)}f(x, \textsc{reflect}_x^B (v))  -  \frac{\lambda_B(x,v)}{\widehat\Lambda^\kappa(x,v)} f(x,v)
        \right  ]
    \right ] \\
    &=\langle \nabla_x f(x,v), \phi \star v \rangle +  
    \sum_{\kappa \in \{odd, even \}}     
    \left [ 
        \sum_{B \in \overline B_\kappa} 
        \left [
        \lambda_B(x,v) f(x, \textsc{reflect}_x^B (v))  - \lambda_B(x,v) f(x,v)
        \right  ]
    \right ]  \\
    &= \langle \nabla_x f(x,v), \phi \star v \rangle +
    \sum_{B \in \overline B}  \lambda_B(x,v)
    \left [
         f(x, \textsc{reflect}_x^B (v))  - f(x,v)
    \right ] \\
    &= \mathcal L^{\textsc{bBPS}}f(x,v),
\end{align*}
which by Proposition \ref{prop:bbps} gives the result.
\end{proof}
\subsection{Proof of Lemma \ref{lemma:bound_rate}}
\label{proof:event_rates}
\begin{proof}
We will just consider the odd strategy in the proof, everything translates seamlessly. By Hölders inequality, we have for any $p \in  \mathbb N$

\begin{align*}
\mathbb E_{\pi \otimes p_v} (\widehat{\Lambda}_{odd}) &= 
\mathbb E_{\pi \otimes p_v} (\max_{B \in \overline B_{odd}} \lambda_B)
\\ &
\leq \left ( \mathbb E_{\pi \otimes p_v} \max_{B \in \overline B_{odd}}  | \lambda_B|^p \right )^{\frac{1}{p}} \\ 
 &
\leq \left ( \mathbb E_{\pi \otimes p_v} \sum_{B \in \overline B_{odd}}  | \lambda_B|^p \right )^{\frac{1}{p}} 
 \\
&\leq|\overline B_{odd}|^{\frac{1}{p}} \max_{B \in \overline B_{odd}} \left (\mathbb E_{\pi \otimes p_v} | \lambda_B |^p \right )^{\frac{1}{p}}
\end{align*}
For any $B \in \overline{B}_{odd},$ by adapting the proof of Lemma 1.10 in \cite{rigollet2015high}, we have by the sure positivity of the rate function that
\begin{align*}
    \mathbb E_{\pi \otimes p_v} | \lambda_B |^p &= \int_{[0,\infty)} \mathbb P(\lambda_B \geq t^{1/p}) dt \\
    &\leq 2\int_{[0,\infty)} e^{-2\alpha t^{1/p}}dt \\
    &= \frac{2p}{(2\alpha)^p} \int_{[0,\infty)} e^{-u} u^{p-1} du  \\
    &= \frac{1}{\alpha^p} p!,
\end{align*}
such that in particular
\begin{align}
    \mathbb E_{\pi \otimes p_v} \widehat{\Lambda}_{odd} \leq 
    |\overline B_{odd}|^{\frac{1}{p}} \frac{2p}{\alpha}.
\end{align}
Optimizing over $p$ leads to $p^* = \log |\overline B_{odd}| $, which together with the bound gives
\begin{align*}
    \mathbb E_{\pi \otimes p_v} \widehat{\Lambda}_{odd} \leq |\overline B_{odd}|^{\frac{1}{\log |\overline{B}_{odd}|}} \frac{2\log |\overline{B}_{odd}|}{\alpha} \leq \frac{2e}{\alpha} \log |\overline{B}_{odd}|,
\end{align*}
implying the result. 
\end{proof}

\section{Parameters of the stochastic volatility model}
\subsection{Choice of parameters}
The daily asset prices $(p_n)_{n=1,2,\ldots, N}$ is collected from eSignal Data Access and transformed to log returns via the relation $y_n = \log p_n / p_{n-1}$. As our paper is centered on latent state estimation, we have foregone a full Bayesian parameter estimation. Instead, for all unobserved quantities we have used the parameters proposed by \cite{ishihara2012efficient} Section 3, these are based on previous empirical studies by the authors and quite closely correspond to what is inferred during their parameter estimation procedure on S\&P 500 data. In particular, we set the persistence parameter to $\alpha = 0.99$ and use $\nu = 15$ degrees of freedom for the multivariate t-distribution. 

For the unobserved volatility covariance matrix $\Sigma_\eta$, the cross-asset correlation is set at $0.7$, and the same standard deviation is assumed for each asset, $0.2$. 
For the leverage matrix, the intra-asset parameter is set at $\Sigma_{\rho, ii} = -0.4$ and cross-asset leverage $\Sigma_{\rho,ij} = -0.3$.

We estimate the return covariance matrix $\Sigma_\epsilon$ directly from the observed log returns over the entire period. The values we arrive at from this procedure is again close to what is empirically observed, indicating it is a reasonable parameter value to use for a latent states estimate. 
If we were to run a spatially blocked scheme, we could for example apply a hierarchical clustering algorithm like the algorithm of \cite{ward1963hierarchical} to learn the relationship between the assets, and then rearrange the order of the assets to match the order of the clustering procedure. As discussed this should have a beneficial effect on mixing, as the blocks become more localised.

\bibliography{bibfile}

\end{document}